\title{Economic Cycles of Carnot Type}
\author{C. Udriste, V. Golubyatnikov, I. Tevy}
\date{Bicentennial of the University Politehnica of Bucharest 1818-2018}
\begin{document}
\maketitle
\textheight 19cm
\textwidth 16cm
\topmargin 0,1cm
\newcommand{\di}{\displaystyle}
\newcommand{\ov}{\over}
\newcommand{\noa}{\noalign{\medskip}}
\newcommand{\al}{\alpha}
\newcommand{\be}{\beta}
\newcommand{\om}{\omega}
\newcommand{\bt}{\bar\tau}
\newcommand{\br}{\hbox{\bf R}}
\newcommand{\bd}{\hbox{\bf D}}
\newcommand{\bp}{\hbox{\bf P}}
\newcommand{\bn}{\hbox{\bf N}}
\newcommand{\bee}{\hbox{\bf E}}
\newcommand{\bc}{\hbox{\bf C}}
\newcommand{\bb}{\hbox{\bf B}}
\newcommand{\bv}{\hbox{\bf V}}
\newcommand{\sgn}{\hbox{sgn}}
\newcommand{\rot}{\hbox{rot}}
\newcommand{\divv}{\hbox{div}}
\newcommand{\bh}{\hbox{\bf H}}
\newcommand{\bm}{\hbox{\bf M}}
\newcommand{\ome}{\Omega}
\newcommand{\ba}{\bar a}
\newcommand{\ti}{\times}
\newcommand{\ga}{\gamma}
\newcommand{\ty}{\infty}
\newcommand{\de}{\delta}
\newcommand{\te}{\theta}
\newcommand{\na}{\nabla}
\newcommand{\pa}{\partial}
\newcommand{\va}{\varphi}
\newcommand{\ld}{\ldots}
\newcommand{\qu}{\quad}
\newcommand{\la}{\lambda}
\newcommand{\fo}{\forall}
\newcommand{\ep}{\varepsilon}
\newcommand{\pp}{\prime}
\newcommand{\su}{\subset}
\newcommand{\si}{\sigma}
\newcommand{\dd}{\Delta}
\newcommand{\gaa}{\Gamma}
\newcommand{\ri}{\Rightarrow}
\newcommand{\rii}{\Leftrightarrow}
\newcommand{\med}{\medskip}

\theoremstyle{definition}
\newtheorem{definition}{Definition}[section]
\theoremstyle{theorem}
\newtheorem{theorem}{Theorem}[section]
\newtheorem{proposition}{Proposition}[section]
\newtheorem{corollary}[theorem]{Corollary}
\newtheorem{lemma}[theorem]{Lemma}
\newtheorem{remark}{Remark}[section]

\begin{abstract}
Originally, the Carnot cycle is a theoretical thermodynamic cycle that
provides an upper limit on the efficiency that any classical thermodynamic engine can
achieve during the conversion of heat into work, or conversely, the efficiency of a
refrigeration system in creating a temperature difference by the application of work
to the system.

The aim of this paper is to introduce and study the economic Carnot cycles
into a Roegenian economy, using our Thermodynamic-Economic Dictionary.
Of course, the most difficult questions are: what is the economic significance of such a cycle?
Roegenian economics is acceptable or not, in terms of practical applications?
Our answer is yes for both questions.
\end{abstract}

{\bf AMS Mathematical Classification}: 80A99, 91B54, 91B74

{\bf J.E.L. Classification: B41}

{\bf Key words}: Thermodynamic-economic dictionary, Roegenian economics,
economic Carnot cycle, ideal income case.

\section{Thermodynamic-Economic Dictionary}

Thermodynamics is important as a model of phenomenological theory, which describes and
unifies certain properties of different types of physical systems. There are many
systems in biology, economics and computer science, for which an organization
similar and unitary-phenomenological would be desirable. Our purpose is to present
certain features of the economy that are inspired by thermodynamics and vice versa.
In this context, we offer again a Dictionary that reflects the Thermodynamics-Economy isomorphism.
The formal analytical-mathematical analogy between economics and thermodynamics is now well-known
or at least accepted by economists and physicists as well.

Starting from these remarks, the works of Udriste et al. \cite{[10]}-\cite{[20]}
build an isomorphism between thermodynamics and the economy, admitting that
fundamental laws are also in correspondence through our identification.
Therefore, each thermodynamic system is naturally equivalent to an economic system,
and thermodynamic laws have correspondence in the economy.

In the following, we reproduce again the correspondence between the characteristic
state variables and the laws of thermodynamics with the macro-economics ingredients
as described in Udriste et al. (2002-2013) based on the theory on which the economy is founded in 1971
\cite{[3]}. They also allow us to include in the economy the idea of a "black hole"
with a similar meaning to the one in astrophysics [Udriste, Ferrara \cite{[16]}].
We do not know if Roegen would have judged this, but that's why now we are judging him instead.

\vspace{1cm}
{\hspace{-1cm}
\begin{tabular}{lll}
\vspace{0.3cm}
\hspace{0.5cm}THERMODYNAMICS&\hfill & \hspace{0.7cm}ECONOMICS \\
U=\hbox{internal energy} & \hfill\ldots &G=\hbox{growth potential}\\
T=\hbox{temperature}&\hfill\ldots& I=\hbox{internal politics stability}\\
S=\hbox{entropy}&\hfill \ldots& E=\hbox{entropy}\\
P=\hbox{pressure}& \hfill \ldots& P=\hbox{price level (inflation)}\\
V=\hbox{volume}&\hfill \ldots& Q=\hbox{volume, structure, quality}\\
M=\hbox{total energy (mass)}&\hfill \ldots& Y=\hbox{national income (income)}\\
Q=\hbox{electric charge}& \hfill \ldots& $\mathcal{I}$=\hbox{total investment}\\
J= \hbox{angular momentum}&\hfill \ldots& J=\hbox{economic angular momentum}\\
\hspace{1cm}\hbox{(spin)}&\hfill &\hspace{1cm}\hbox{(economic spin)}\\
M=M(S,Q,J)& \hfill \ldots& Y=Y(E,{$\mathcal{I}$},J)\\
$\Omega = \frac{\partial M}{\partial J}$= \hbox{angular speed}&\hfill \ldots& $\frac{\partial Y}{\partial J}$=\hbox{marginal inclination to rotate}\\
$\Phi = \frac{\partial M}{\partial Q}$=\hbox{electric potential}&\hfill \ldots& $\frac{\partial Y}{\partial {\cal{I}}}$=\hbox{marginal inclination to investment}\\
$T_H= \frac{\partial M}{\partial S}$=\hbox{Hawking temperature}&\hfill \ldots& $\frac{\partial Y}{\partial E}$=\hbox{marginal inclination to entropy}\\

\end{tabular}}
\vspace{1cm}

{\it The Gibbs-Pfaff fundamental equation in thermodynamics}
$ dU-TdS + PdV + \sum_k \mu_k dN_k = 0 $ is changed to
{\it Gibbs-Pfaff fundamental equation of economy}
$ dG-IdE + PdQ + \sum_k \nu_k d {\cal {N}} _ k = 0 $.
These equations are combinations of the first law and the
second law (in thermodynamics and economy respectively).
The third law of thermodynamics $ \di \lim_{T \to 0} S = 0 $
suggests the third law of economy $ \di \lim_ {I \to 0} E = 0 $
"if the internal political stability $ I $ tends to $ 0 $,
the system is blocked, meaning entropy becomes $ E = 0 $,
equivalent to maintaining the functionality of the economic system must cause disruption").

Process variables $ W = $ {\it mechanical works} and $ Q $ = {\it heat}
are introduced into elementary mechanical thermodynamics by $ dW = PdV $
(the first law) and by elementary heat, respectively, $ dQ = TdS $, for reversible processes,
or $ dQ <TdS $, for irreversible processes ({\it second law}).
Their correspondence in the economy,
$$ W = \hbox {\it wealth of the system},\,\,q= \hbox{\it production of goods}$$
are defined by $ dW = Pdq $ ({\it elementary wealth in the economy})
and $ dq = IdE $ or $ dq <IdE $ ({\it the second law or the elementary
production of commodities}). A commodity
is an economic good, a product of human labor, with a utility in the sense of life,
for sale-purchase on the market in the economy.

Sometimes a thermodynamic system is found in an
{\it external electromagnetic field} $ (\vec {E}, \vec {H}) $.
{\it The external electric field} $ \vec E $ determines
the {\it polarization} $ \vec P $ and {\ it the external magnetic field} $ \vec H $
determine {\it magnetizing} $ \vec M $. Together they give the
total elementary mechanical work  $ dW = P\,dV + \vec E\, d \vec P + \vec H\, d \vec M $.
Naturally, an economic system is found in an
{\it external econo-electromagnetic field} $(\vec {e}, \vec {h})$.
The {\it external investment (econo-electric) field} $\vec e$ determines
{\it initial growth condition field (econo-polarization field)} $ \vec p $
and the {\it external growth field (econo-magnetic field)} $ \vec h $
cause {\it growth (econo-magnetization)} $ \vec m $. All these fields
produce the elementary mechanical work $dW=P\,dQ+\vec e\,
d\vec e +\vec h\, d\vec m$. The economic fields introduced here are
imposed on the one hand by the type of economic system and on the
other hand by the policy makers (government, public companies, private firms, etc.).

The long term association between Economy and Thermodynamics can be strengthened
with new tools based on the previous dictionary. Of course, this new idea of
the thermodynamically-economical dictionary produces concepts different from those in econophysics \cite{[1]}
and concepts of Thermodynamics in economic systems \cite{[7]}, \cite{[JM]}, \cite{[9]}.
Econophysics seems to build similar economic notions to physics, as if those in the economy were not enough.

The thermodynamic-economic dictionary allows the transfer of information from
one discipline to another (Udriste et al., \cite{[10]}-\cite{[20]}),
keeping the background of each discipline, that we think that was suggested by Roegen in 1971 \cite{[3]}.

\begin{definition}
{\it Economics based on rules similar to those in thermodynamics is called Roegenian economics}.
\end{definition}

Economics described by Gibbs-Pfaff equation is Roegenian economics.

A macro-economic system based on a Gibbs-Pfaff equation is both Roegenian and controllable (see \cite{[19]}).

Roegenian economics (also called bioeconomics by Georgescu-Roegen) is both a
transdisciplinary and an interdisciplinary field of academic research addressing
the interdependence and coevolution of human economies and natural ecosystems,
both intertemporally and spatially. The economic entropy is closely related to
disorder and complexity \cite{[4]}.

\section{What Is the Carnot Cycle in \\Thermodynamics?}

The Carnot cycle \cite{[2]}, \cite{[BDU]} is a theoretical thermodynamic cycle proposed by
French physicist Sadi Carnot. It provides an upper limit on the efficiency that
any classical thermodynamic engine can achieve during the conversion of
heat into work, or conversely, the efficiency of a refrigeration system
in creating a temperature difference by the application of work to the system.

A heat engine is a device that produces motion from heat and
includes gasoline engines and steam engines. These devices
vary in efficiency. The Carnot cycle describes the most efficient
possible heat engine, involving two isothermal processes and
two adiabatic processes. It is the most efficient heat engine
that is possible within the laws of physics.

The second law of thermodynamics states that it is impossible to
extract heat from a hot reservoir and use it all to do work;
some must be exhausted into a cold reservoir. Or, in other words,
no process can be "one hundred percent"
efficient because energy is always lost somewhere. The Carnot cycle
sets the upper limit for what is possible, for what the maximally-efficient
engine would look like.

When we go through the details of the Carnot cycle, we should
define two important adjectives, isothermic and adiabatic, which characterize a process.
An isothermic process means that the temperature remains constant and the
volume and pressure vary relative to each other. An adiabatic process
means that no heat enters or leaves the system to or from a
reservoir and the temperature, pressure, and volume are all free to change,
relative to each other.

\begin{remark}
The idea of Carnot's cycles in the Roegenian economy was suggested by
Vladimir Golubyatnikov in a discussion held at the "The XII-th International Conference
of Differential Geometry and Dynamical Systems (DGDS-2018), 30 August - 2 September 2018, Mangalia, Romania"  on
the communication of Constantin Udriste about the Roegenian economy and
the dictionary that creates this point of view in the economy.

The economic Carnot cycle in Roegenian economics, introduced and analysed in the next two Sections,
are and are not the same with economic cycles described in \cite{[BDU]}, \cite{[41]}, \cite{[AKh]},  \cite{[AK]}.
\end{remark}

\section{Economic Carnot Cycle, ${\bf Q-P}$ Diagram}

Let us consider again the Roegenian economics (an economic distribution) generated by the Gibbs-Pfaff fundamental
equation $dG-IdE + PdQ =0$ on $\mathbb{R}^5$, where $(G, I, E, P, Q)$ are economic variables, according the previous dictionary
($G$= growth potential, $I$= internal politics stability, $E$=entropy, $P$= price level (inflation),
$Q$= volume, structure, quality).

The heat $Q$ in thermodynamics is associated to $q = \hbox{\it production of goods}$ in economics.
Let $I$ be the internal politics stability, $P$ be the price level, and $E$ the economic entropy.

The properties of an economic cycle are represented firstly on the ${\bf Q-P}$ diagram (Figure 1, arrows clockwise).
This diagram is based on economical quantities $Q$ and $P$ that we can measure.
In the ${\bf Q-P}$ diagram, the curves (evolutions) $1-2$, $3-4$
are considered at constant internal politics stability (corresponding to isothermic transformations)
and the curves (evolutions) $2-3$, $4-1$ are considered at constant levels of
goods production regarding as supply (corresponding to adiabatic transformations).
See, "supply and demand law" in economics.

Let us describe the four steps that make up an economic Carnot cycle.
Process $1-2$ is an iso-internal politics stability expansion,
where the volume increases and the price level decreases at a constant internal politics stability.
Process $2-3$ is a constant levels of goods production expansion.

\begin{figure}
  \centering
  \includegraphics[width=9cm]{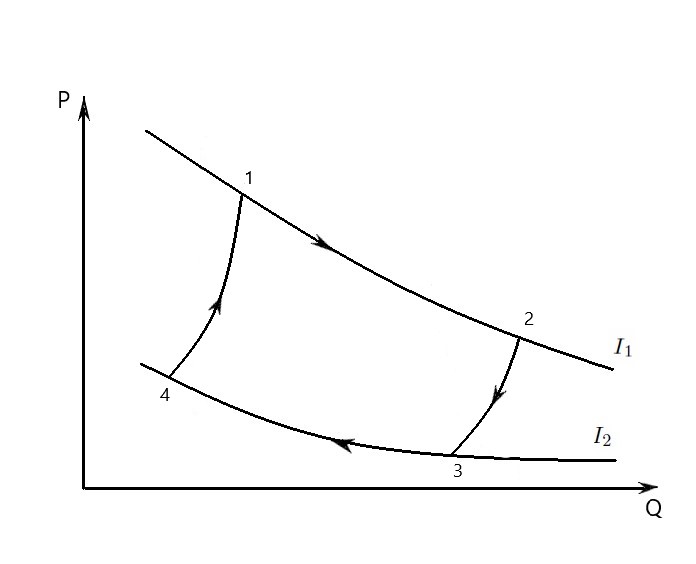}\\
 \caption{Economic Carnot cycle illustrated on a Q-P diagram to show the wealth done}\label{ECC}
 \end{figure}

\begin{figure}
  \centering
  \includegraphics[width=9cm]{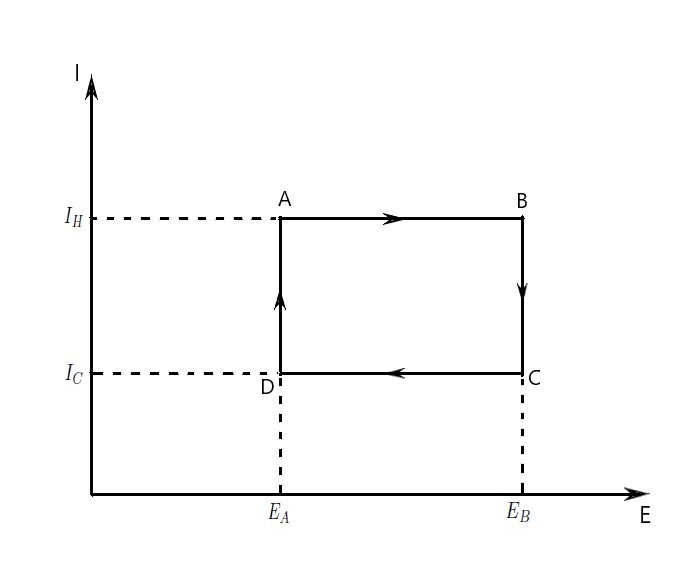}\\
  \caption{Economic Carnot cycle acting as economic engine producing goods, illustrated on a E-I diagram}\label{ECC}
\end{figure}

Process $3-4$ is an iso-internal politics stability compression,
where the volume decreases and the price level increases at an internal politics stability.
Lastly, process $4-1$ is a constant levels of goods production compression.
Process $2-3$ is a constant levels of goods production expansion.
Process $3-4$ is an iso-internal politics stability compression,
where the volume decreases and the price level increases at an internal politics stability.
Lastly, process $4-1$ is a constant levels of goods production compression.

The useful wealth $W$ in the economy that comes out of the economic Carnot cycle is the
difference between the wealth $W_1$ in the economy done by the economic system in stages $1-2$ and $2-3$
and the wealth $W_2$ in the economy done (or the economic energy wasted) by you in stages $3-4$ and $4-1$.
The economic Carnot cycle described above is the most favorable case, because it
produces the largest difference between these values allowed by the laws of economics.

\section{Economic Carnot Cycle, ${\bf E-I}$ Diagram}

Let $(G, I, E, P, Q)$ be economic variables, according the previous dictionary
($G$= growth potential, $I$= internal politics stability, $E$=entropy, $P$= price level (inflation),
$Q$= volume, structure, quality). We refer again the Roegenian economics (an economic distribution)
generated by the Gibbs-Pfaff fundamental equation $dG-IdE + PdQ =0$ on $\mathbb{R}^5$.

The heat $Q$ in thermodynamics is associated to $q= \hbox{\it production of goods}$ in economics.
Let $I$ be the internal politics stability, and $E$ the economic entropy satisfying the Pfaff equation $dq=IdE$.
The properties of an economic process are represented secondly on the ${\bf E-I}$ diagram (Figure 2),
although $E$ and $I$ are not directly measurable.

The evolution in the plane ${\bf E-I}$ of an economic system in time will be described by a curve
connecting an initial state (A) and a final state (B). The area under the curve will be
$${\displaystyle q=\int _{A}^{B}I\,dE},$$
which is the amount of production of goods (economical energy) transferred in the process. If the process
moves to greater entropy, the area under the curve will be the amount of production of goods
absorbed by the system in that process. If the process moves towards lesser entropy,
it will be the amount of production of goods removed. For any cyclic process, there will be an
upper portion of the cycle and a lower portion. For a clockwise cycle, the area
under the upper portion will be the economical energy absorbed during the cycle,
while the area under the lower portion will be the economical energy removed during
the cycle.

The area inside the cycle will then be the difference between the two,
but since the economical energy of the system must have returned to its initial value,
this difference must be the amount of the wealth of the economic system over the cycle.

\begin{theorem}
The amount of the wealth of the economic system done over a cyclic process is
$${\displaystyle W= \oint IdE = (I_{H}-I_{C})(E_{B}-E_{A})}.$$
\end{theorem}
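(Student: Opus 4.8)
The plan is to exploit the rectangular geometry that the Carnot cycle assumes in the ${\bf E}$-${\bf I}$ plane and then split the closed line integral $\oint I\,dE$ into the four edges of that rectangle. First I would translate the two defining constraints of the cycle into statements about the diagram. The iso-internal-politics-stability legs $1$-$2$ and $3$-$4$ are performed at constant $I$, so they are horizontal segments $I = I_H$ and $I = I_C$ respectively. The adiabatic legs $2$-$3$ and $4$-$1$ are performed at constant levels of goods production, i.e. $dq = 0$; since the Pfaff relation gives $dq = I\,dE$ and $I \neq 0$ along the cycle, this forces $dE = 0$, so these two legs are vertical segments at the fixed entropy values $E = E_B$ and $E = E_A$. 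Thus the oriented boundary of the cycle consists of exactly four straight edges forming a rectangle with vertices at $(E_A, I_H)$, $(E_B, I_H)$, $(E_B, I_C)$, $(E_A, I_C)$.

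Next I would evaluate the integral edge by edge, using the orientation fixed by the clockwise traversal $1 \to 2 \to 3 \to 4 \to 1$. On the two adiabatic edges $dE = 0$, hence $\int I\,dE = 0$ there and they contribute nothing. On the upper iso-$I$ edge $1$-$2$ the factor $I$ is the constant $I_H$ while $E$ runs from $E_A$ to $E_B$, giving $I_H(E_B - E_A)$; on the lower iso-$I$ edge $3$-$4$ the factor $I$ is the constant $I_C$ while $E$ runs back from $E_B$ to $E_A$, giving $I_C(E_A - E_B) = -I_C(E_B - E_A)$. Summing the four contributions yields
$$W = \oint I\,dE = I_H(E_B - E_A) - I_C(E_B - E_A) = (I_H - I_C)(E_B - E_A),$$
which is the claimed formula and equals the signed area enclosed by the rectangle.

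The computation itself is elementary; the only point that requires care is the identification of the adiabatic legs with segments of constant entropy, which rests entirely on the relation $dq = I\,dE$ together with the standing assumption $I \neq 0$. I expect this modeling step to be the main (and essentially only) obstacle: one must verify that the two reservoir stabilities $I_H > I_C$ are genuinely held constant along their respective iso-$I$ legs and that the entropy endpoints $E_A, E_B$ are shared by those two legs, so that the contour closes into a true rectangle. Once the cycle is recognized as this rectangle, the theorem is just the evaluation of $\oint I\,dE$ as its enclosed area, in agreement with the earlier remark that the area inside the cycle equals the net wealth $W$ produced by the economic system over one cycle.
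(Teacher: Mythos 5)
Your evaluation of the loop integral is correct, but it takes a genuinely different route from the paper's. You identify the cycle as a rectangle in the ${\bf E}$-${\bf I}$ plane (correctly deducing that the adiabatic legs are vertical from $dq=I\,dE$ together with $I\neq 0$ --- a point the paper leaves entirely implicit) and then compute $\oint I\,dE$ edge by edge: the two adiabats contribute nothing, and the iso-$I$ legs contribute $I_{H}(E_{B}-E_{A})$ and $-I_{C}(E_{B}-E_{A})$. The paper instead invokes Green's formula, converting $\oint I\,dE$ into the double integral $\iint_{\Delta}dI\,dE$ over the enclosed rectangle. Both yield the enclosed area $(I_{H}-I_{C})(E_{B}-E_{A})$; your parametrization is the more elementary and self-contained of the two, and it makes explicit the geometric facts (constancy of $I$ on two legs, constancy of $E$ on the other two, shared endpoints) that the Green's-formula route silently assumes.

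There is, however, one step of the theorem you do not actually prove: the first equality $W=\oint I\,dE$. In this paper wealth is \emph{defined} by $dW=P\,dQ$, so one must pass from $\oint P\,dQ$ to $\oint I\,dE$, and this is where the paper does real work: using the Gibbs--Pfaff fundamental equation $dG-I\,dE+P\,dQ=0$ it writes $W=\oint P\,dQ=\oint\left(I\,dE-dG\right)=\oint I\,dE$, the term $\oint dG$ vanishing because $dG$ is an exact differential. You instead appeal to the earlier narrative remark that the area inside the cycle equals the net wealth; but that remark is heuristic (it rests on ``the economical energy of the system must have returned to its initial value''), and the exactness-of-$dG$ computation is precisely what makes it rigorous. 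Adding that one line --- $\oint P\,dQ=\oint I\,dE$ because $\oint dG=0$ --- would close the gap and make your proof complete.
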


\begin{proof}

Referring to "$W$ = the wealth of the system, $P$=price level (inflation), $Q$ = volume,
structure, quality", mathematically, for a reversible process we may write
the amount of the wealth of the economic system done over a cyclic process as
$${\displaystyle W=\oint PdQ=\oint dq-dG =\oint IdE-dG = \oint IdE-\oint dG=\oint IdE}.$$
Since $dG$ is an exact differential, its integral over any closed loop is zero
and it follows that the area inside the loop on a ${\bf E-I}$ diagram is equal to the
total amount of the wealth performed if the loop is traversed in a clockwise direction,
and is equal to the total amount of the wealth done on the system as the loop is traversed
in a counterclockwise direction.

Evaluation of the above integral is particularly simple for the Carnot cycle if we use
the ${\bf E-I}$ diagram (Figure 2).
The amount of economic energy transferred as wealth of the system is (see Green formula)
$${\displaystyle W=\oint PdQ=\oint IdE=\iint_\Delta dI\,dE =(I_{H}-I_{C})(E_{B}-E_{A})}.$$

\end{proof}

The total amount of economic energy transferred from the production
to the system is given by
${\displaystyle q_{H}=I_{H}(E_{B}-E_{A})}$,
and the total amount of economic energy transferred from the system to the consumption is
$${\displaystyle q_{C}=I_{C}(E_{B}-E_{A})},$$
where ${\displaystyle E_{B}}$ is the maximum system entropy,
${\displaystyle E_{A}}$ is the minimum system entropy.

\section{Efficiency of an Economic System}

We recall the notations: ${\displaystyle W}$ is the wealth done by the system (economic energy),
${\displaystyle q_{C}}$ is the production of goods taken from the system (economic energy leaving the system),
${\displaystyle q_{H}}$ is the production of goods put into the system (economic energy entering the system),
${\displaystyle I_{C}}$ is the absolute internal politics stability of the consumption,
${\displaystyle I_{H}}$ is the absolute internal politics stability of the production.

\begin{definition}
The number
$${\displaystyle \eta ={\frac {W}{q_{H}}}=1-{\frac {I_{C}}{I_{H}}}}$$
is called the {\it efficiency} of an economic system.
\end{definition}
The absolute internal politics stability of the consumption can not be as small as
it is limited by social restrictions. The absolute internal politics stability of the production
can not be as large as it is limited by resources and labor. A wrong interpretation of
the efficiency of an economic Roegenian system can generate
{\bf Homeland Falcons Hymn}: to grow strong and big without eating anything.

This definition of efficiency makes sense for an economic "engine",
since it is the fraction of the economic energy extracted from
the production and converted to wealth done by the system (economic energy).
In practical aspects of this theory, we use approximations.

Economic Carnot cycle has maximum efficiency for reversible economic "engine".

The Carnot economic cycle previously described is a totally reversible cycle.
That is, all the processes that comprise it can be reversed, in which
case it becomes the consumption Carnot cycle. This time, the cycle
remains exactly the same except that the directions of any production and wealth
interactions are reversed. The $\bf{Q-P}$ diagram of the reversed economic Carnot cycle
is the same as for the initial economic Carnot cycle except that the directions of the processes are reversed.

\section{Ideal Income Case}

To transform the ideal gas theory into the ideal income theory,
we recall a part of dictionary and we complete it with new correspondences:
$Q$ = heat $\leftrightarrow$ $q$ {\it production of goods};
$T$ = temperature $\leftrightarrow$ $I$ = {\it internal politics stability}; $P$ = pressure $\leftrightarrow$ $P$ ={\it price level};
$S$ = entropy $\leftrightarrow$ $E$ = {\it economic entropy}; $V$ = volume $\leftrightarrow$ $Q$ ={\it  volume, structure, quality};
$U$ = internal energy $\leftrightarrow$ $G$ = {\it growth potential}; mole $\leftrightarrow$ {\it economic mole};
$R$ = molar gas constant $\leftrightarrow$ $R$ = {\it molar income constant}; $f$ = number of degrees of freedom
$\leftrightarrow$ $f$ = {\it number of degrees of freedom}.
Also, we look at ${\bf Q-P}$ diagram.

Suppose the three states of an economy "inflation,  monetary policy as liquidity, income"
are reduced to one and the same "income".
By the equipartition theorem, the growth potential of one mole of an ideal income is
$G= \frac{f}{2}\,R\,I$, where $f$, the number of degrees of freedom, is $3$
for a mono-atomic income, $5$ for a diatomic income, and $6$ for an income of
arbitrarily shaped economic molecules. The quantity $R$ is the molar income constant.
For the present discussion it is important to notice that $G$ is a function of $I$ only.

In Figure 1 the economic Carnot cycle is represented in a $\bf{Q-P}$ diagram.
Consider the compression (decreasing "volume, structure, quality") of the income along the path 3-4.
Along this path the internal politics stability is $I_c$. The economic work done is
converted into production of goods $q_c$.
This process is so slow that the internal politics stability remains constant
(i.e., it is an iso-ips), and hence the
the potential growth $Q$ of the income does not change, i.e., $dQ = 0$. By the first and second law of economics,
reference to Figure 1, and the ideal income law $PQ = R\,I$ (for the sake of argument we consider one mole of income),
we have
$$q_c = I_c\int_3^4 dE = \int_3^4 PdQ = RI_c \int_3^4 \frac{dQ}{Q}\,\,\Rightarrow \,\, E_4-E_3= R \ln \,\frac{Q_4}{Q_3} < 0.$$
The economic entropy $E$ of an ideal income is a logarithmic function of its "volume, structure, quality" $Q$.

In the $Q-P$ plane, the iso-ips 3-4 has the following expression $P=\frac{R\,I_c}{Q}$,
i.e., $P$ is a function of $Q$ only.

At point 4, we move on a curve similar to an adiabatic curve.
We move inward from $Q_4$ to $Q_3$ and by this compression the income is
amplified to exactly the internal politics stability $I_h$. From 4 to 1 the
path in the $\bf{Q-P}$ diagram is similar to adiabatic and reversible path. The entropy does
not change during this compression, i.e., $dE = 0$ (an isentropic economic process).
It is easy to derive the equation for the adiabatic in the $Q-P$ plane,
$$dG=\frac{f}{2}R\,\,dI= -P\,\,dQ= -\frac{RI}{Q}\,dQ \,\,\Rightarrow \,\,\frac{f}{2}\ln \frac{I_3}{I_4} =-\ln \frac{Q_3}{Q_4}$$
and the ideal income law $PQ = RI$ (for the sake of argument we consider one mole of income) implies
$$\frac{f}{2}\ln \frac{P_3Q_3}{P_4Q_4} =-\ln \frac{Q_3}{Q_4}\,\,\Rightarrow\,\, \frac{P_3}{P_4}= \left(\frac{Q_3}{Q_4}\right)^{-\frac{f+2}{f}}.$$

For a mono-atomic income ($f=3$) the expression of the curve 4-1 is $P= c\,\,Q^{-\frac{5}{3}}$, where
$c= P_4Q_4^{\frac{5}{3}}$.

The integral $\int_3^1 P\,dQ$ is the economic work done. In the $\bf{Q-P}$ diagram this integral is
the negative of the area bounded by the curve 3-4-1 and the $Q$ axis.

At point 1, we have the internal politics stability $I_h$ and total production of goods $q_h$.
The iso-ips 1-2 has the expression $P(Q) = \frac{R\,I_h}{Q}$ and the entropy increases, i.e.,
$$E_2-E_1= R\ln \frac{Q_2}{Q_1}>0.$$

From 2 to 3 isentropic economic expansion occurs, the internal politics stability
decreases to $I_c$. The integral $\int_1^3 P\,dQ$  is the economic work.
In the $\bf{Q-P}$ diagram this integral is the area bounded by the curve 1-2-3 and the $Q$ axis.
Note that this area is larger than the area below the curve 3-4-1.
Difference of the two areas, the economic work done during the cycle, is $\oint P\,dQ$,
where the closed curve is $1-2-3-4-1$.

By the first law of economics, the total work $W$ is equal to $q_h - q_c$.

The economic process, as depicted in Figure 1, runs clockwise. All the economic processes are reversible,
so that all arrows can be reverted.

\section{Economic Van der Waals Equation}

Let $Q_m$ be the molar volume of the income, $R$ be the universal income constant,
$P$ be the price level, $Q$ be the "volume, structure, quality", and $I$ be internal politics stability.

The Van der Waals equation in Thermodynamics \cite{[21]} has a correspondent into economics,
where the variables are $P,Q,I$. The idea is based on plausible reasons that
real incomes do not follow the ideal income law.

The ideal income law states that volume $Q$ occupied by $n$ moles of any income has a
price level $P$, typically in reference currency, at internal politics stability $I$
in percents - see, for instance, Moody’s rating. The relationship for these variables,
$P Q = n R I$, is called the {\it ideal income economic law} or equation of state.

\begin{figure}
  \centering
  \includegraphics[width=12cm]{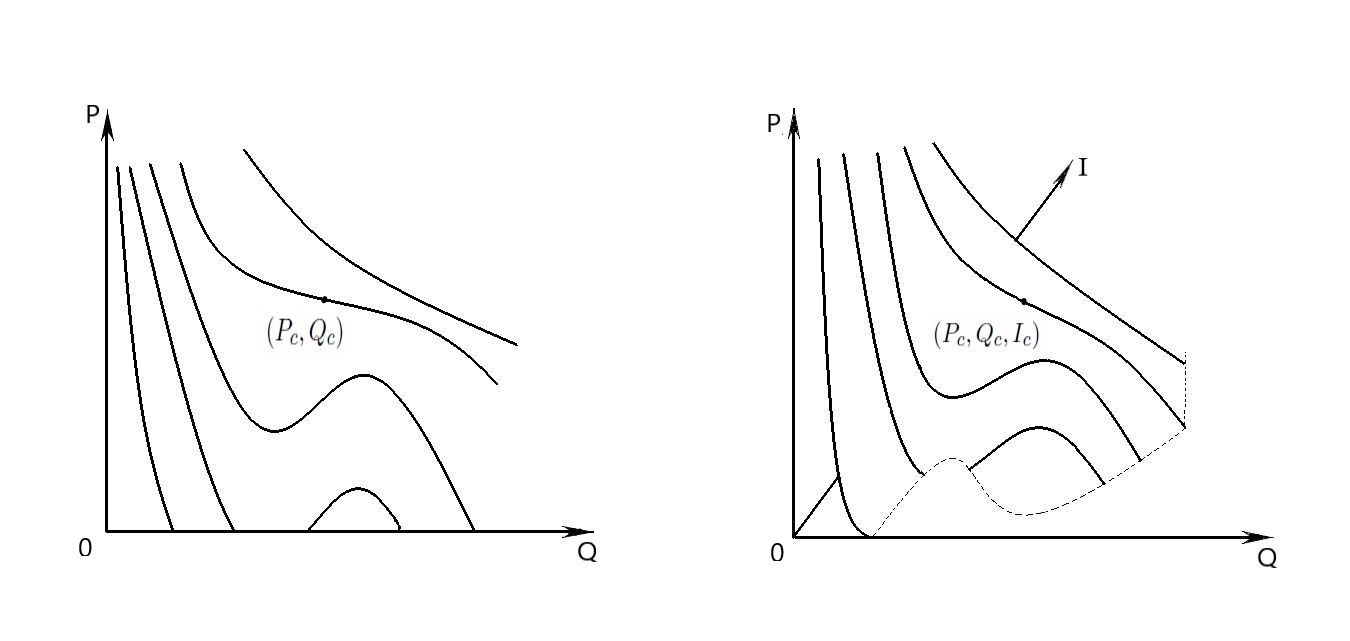}\\
 \caption{Van der Waals surface}\label{VW}
\end{figure}

The {\it economic Van der Waals equation} is
$$\left(P+\frac{a}{Q_m^2}\right)(Q_m - b)= R\,I,\,\, \hbox{or}\,\,
\left(P+\frac{an^2}{Q^2}\right)(Q - nb)= n\,R\,I.$$
When the molar volume $Q_m$ is large, the constant $b$ becomes negligible in comparison with $Q_m$,
the term $\frac{a}{Q_m^2}$ becomes negligible with respect to $P$, the economic Van der Waals
equation reduces to the ideal income economic law, $P\,Q_m = R\,I$.

The economic Van der Waals equation successfully approximates the behavior
of real "$mp-lc$ = monetary policies as liquidity or consumption"
above critical internal politics stability and is
qualitatively reasonable for points ($mp-lc$, level of prices, internal politics stability)
around critical point.
However, near the transitions between income and $mp-lc$,
in the range of $P, Q, I$, where the $mp-lc$ phase and the income phase are in equilibrium,
the Van der Waals equation fails to accurately model observed experimental behaviour;
in particular $P$ is a constant function of $Q$ at given internal politics stability,
but the Van der Waals equation do not confirm that.
As such, the Van der Waals model is not useful only for calculations intended to predict real
behavior in regions near the critical point, but also for qualitatively properties. 
Empirical corrections to address these predictive
deficiencies can be inserted into the Van der Waals model, e.g., equal area rule (see Maxwell theory),
and related but distinct theoretical models, e.g., based on the principle of
corresponding states. All of these permit to achieve better fits to real $mp-lc$
behaviour in equations of more comparable complexity.

Geometrically, the previous equation represents a surface
located in the first octane of the point space $(0,\infty)^3=\{(P, Q, I)\}$, that we
call the {\it economic Van der Waals surface}, Fig. 3. The shape of this surface follows
from the shapes of the sections by the planes $I$ = constant (Van der Waals iso-"mp-lc").
There is a stationary inflection point in the constant - "internal politics stability" curve
(critical iso-"mp-lc") on a $\bf {Q-P}$ diagram. This means that at the degenerate critical point we must have
$$\frac{\partial P}{\partial Q}\Big\vert_I =0,\,\,\frac{\partial^2 P}{\partial Q^2}\Big\vert_I =0.$$
On the Van der Waals surface, the above critical point $(P_c,Q_c,I_c)$ has the coordinates
$$P_c = \frac{a}{27b^2},\,\, Q_c = 3b,\,\, I_c= \frac{8a}{27bR}.$$

\begin{remark}
Let us show that we can find the critical point without the use of the "partial derivative".
At the critical point, the income is characterized by the critical values of
$I_c$, $P_c$, $Q_c$, which are determined only by the income properties.
From the algebraic point of view, the iso-"mp-lc" through the critical point,
i.e.,
$Q^3- \left(b + \frac{RI}{P}\right)Q^2 + \frac{a}{P}\,\,Q - \frac{ab}{P}=0,$
like equation in $Q$, has a triple real root $Q_c$, i.e., the equation should be identified to
$(Q- Q_c)^3=0$. Expanding this cube and equating the coefficients of the terms with equal powers of
$Q$,  we find the expressions for the critical parameters.
\end{remark}

Let's connect with catastrophe theory. For this we consider the function
$\varphi:(0,\infty)^3 \to \mathbb{R}^3,$ of components
$$x= \frac{1}{Q}-\frac{1}{3b},\,\, \alpha =\frac{P}{a}+ \frac{RI}{ab}-\frac{1}{3b^2},\,\, \beta = -\frac{P}{3ab}+ \frac{RI}{3ab^2}- \frac{2}{27b^3}.$$

One remarks that $\varphi:(0,\infty)^3\to \varphi((0,\infty)^3)$ is a global diffeomorphism
and $\varphi(P_c,Q_c,I_c)=(0,0,0)$. The image of economic Van der Waals surface by $\varphi$
is characterized by the equation
$x^3+\alpha x+ \beta=0,$ where
$$x \in \left(-\frac{1}{3b},\infty\right),\,\,\alpha \in \left(-\frac{1}{3b^2},\infty\right),
\,\,\beta \in \left(-\infty, \frac{-2}{27b^3}  \right) \cup \left(\frac{-2}{27b^3},\infty\right),$$
that is, it is part of the {\it cuspidal catastrophe manifold}.

\begin{theorem}
All diffeomorphic-invariant information about
Van der Waals $mp-lc$ can be obtained by examining the cuspidal potential
$ x  \to f (x, \alpha, \beta) = \frac{1}{4} x ^ 4 + \frac {1} {2} \alpha x ^ 2 + \beta x,$
which is equivalent to Gibbs' free energy.
\end{theorem}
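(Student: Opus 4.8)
The plan is to read the statement as an instance of Thom's classification of elementary catastrophes. Concretely, I would first show that the transformed Van der Waals surface $x^3 + \al x + \be = 0$ is nothing but the critical manifold of the cusp potential $f$, and then invoke the universal-unfolding theorem to conclude that $f$ already carries every diffeomorphism-invariant feature of the $mp-lc$ system.

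First I would differentiate $f$ in the state variable $x$, which gives
$$\frac{\pa f}{\pa x}(x,\al,\be) = x^3 + \al x + \be.$$
Setting $\pa_x f = 0$ reproduces verbatim the equation of the image of the economic Van der Waals surface under the global diffeomorphism $\va$ constructed above. Hence the equilibrium states of the $mp-lc$ system are exactly the critical points of $f$ along the fibre direction $x$, the whole surface is the catastrophe manifold $M = \{(x,\al,\be) : \pa_x f = 0\}$, and the degenerate critical point $\va(P_c,Q_c,I_c) = (0,0,0)$ is its singular (cusp) point.

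Next I would extract the bifurcation data and the thermodynamic reading. The number of real equilibria changes precisely where the cubic $x^3 + \al x + \be$ acquires a multiple root, i.e. where $4\al^3 + 27\be^2 = 0$; this is the classical cusp curve in the $(\al,\be)$ control plane and is the locus of phase transitions of the $mp-lc$. Stable equilibria are the local minima of $f$, characterised by $\pa_x f = 0$ together with $\pa_x^2 f = 3x^2 + \al > 0$, while the maxima are unstable. This reproduces the rule that thermodynamic equilibrium minimises Gibbs' free energy: in the reduced variables $f$ plays exactly that role, which is the content of the phrase "equivalent to Gibbs' free energy".

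The decisive step is the appeal to Thom's theorem. The germ $\tfrac14 x^4$ has Jacobian ideal generated by $x^3$ and local algebra of dimension three, spanned by $1, x, x^2$; modulo the inessential constant direction its versal unfolding therefore needs exactly two parameters, supplied here by the initial velocities $\pa_\al f|_0 = \tfrac12 x^2$ and $\pa_\be f|_0 = x$. Thus $f$ is a versal, indeed universal, two-parameter unfolding, so every other smooth deformation of the singularity is induced from $f$ by a diffeomorphism of the control plane together with a fibre-preserving change of $x$. Consequently all diffeomorphism-invariant information — the topology of $M$, the cusp bifurcation set, and the stability type of each branch — is already encoded in $f$, and since $\va$ is a diffeomorphism these invariants transfer back intact to the original Van der Waals surface. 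I expect the genuine obstacle to lie not in the computations, which are immediate, but in the clean verification that $f$ is versal (that the initial velocities, the constants, and the Jacobian ideal together span the ring of function germs in $x$), after which the invariance assertion follows from the structural uniqueness of the universal unfolding.
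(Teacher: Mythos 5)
The paper states this theorem with no proof whatsoever --- it is the closing assertion of the section on the economic Van der Waals equation, offered directly after the construction of the diffeomorphism $\varphi$ and left unargued. There is therefore no paper proof to compare against; your proposal supplies the missing argument, and it does so correctly and along exactly the lines the paper's setup invites. Your identity $\frac{\partial f}{\partial x}=x^3+\alpha x+\beta$ identifies the $\varphi$-image of the economic Van der Waals surface with the critical manifold of the cusp potential, and your versality check --- Jacobian ideal $(x^3)$, local algebra spanned by $1,x,x^2$, initial velocities $\tfrac12 x^2$ and $x$ plus the constants --- is the standard verification that $f$ is the universal two-parameter unfolding of the $A_3$ germ $\tfrac14 x^4$; the step you flag as the ``genuine obstacle'' is in fact a one-line application of the usual versality criterion, so nothing essential is left open. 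Two small points of care are worth adding. First, as the paper itself remarks, $\varphi\bigl((0,\infty)^3\bigr)$ is only \emph{part} of the cuspidal catastrophe manifold (with $x>-\tfrac{1}{3b}$, $\alpha>-\tfrac{1}{3b^2}$), so the transfer of diffeomorphism invariants is literally valid only on that open portion; since it contains the degenerate critical point $\varphi(P_c,Q_c,I_c)=(0,0,0)$, every local statement near the cusp survives, but global claims about the full cusp surface should be phrased with this restriction. Second, your reading of ``equivalent to Gibbs' free energy'' --- stable states as minima of $f$ with $\partial_x^2 f=3x^2+\alpha>0$, the discriminant locus $4\alpha^3+27\beta^2=0$ as the phase-transition set --- is the right thermodynamic interpretation and is consistent with the paper's earlier appeal to Maxwell's equal-area rule, even though the paper never makes this equivalence precise itself.
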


{\bf Authors addresses}: Prof. Emeritus Dr. Constantin Udriste, Prof. Dr. Ionel Tevy,
 University  Politehnica of Bucharest, Faculty of Applied Sciences,
Department of Mathematics-Informatics, Splaiul Independentei 313, Bucharest, 060042, Romania.

Prof. Emeritus Dr. Constantin Udriste, Titular Member of Academy of Romanian Scientists,
Spl. Independentei, 54, Sector 5, Bucharest, 50085, Romania

E-mail addresses: udriste@mathem.pub.ro ; tevy@mathem.pub.ro

Prof. Dr. Vladimir Golubyatnikov, Sobolev Institute of Mathematics SB RAS
Department of Inverse and Ill-posed Problems, Kotpyug avenue 4, Novosibirsk,
630090, Russia

E-mail address: golubyatn@yandex.ru

\end{document}